\newtheorem{proposition}{Proposition}
\newtheorem{corollary}{Corollary}
\theoremstyle{definition}
\newtheorem{example}{Example}
\newcommand{\bra}[1]{\langle #1|}
\newcommand{\ket}[1]{| #1 \rangle }
\newcommand{\ip}[2]{{\langle #1|}{ #2 \rangle }}
\newcommand{\tr}[1]{{\rm tr}[#1]}
\newcommand{\be}{\begin{eqnarray}}
\newcommand{\ee}{\end{eqnarray}}
\begin{document}

\title{PPT-inducing, distillation-prohibiting, and entanglement-binding quantum channels}

\author{Sergey N. Filippov}

\affiliation{Moscow Institute of Physics and Technology,
Institutskii Per. 9, Dolgoprudny, Moscow Region 141700, Russia}

\affiliation{Institute of Physics and Technology, Russian Academy
of Sciences, Nakhimovskii Pr. 34, Moscow 117218, Russia}

\affiliation{Russian Quantum Center, Novaya 100, Skolkovo, Moscow
Region 143025, Russia}

\affiliation{P.~N.~Lebedev Physical Institute, Russian Academy of
Sciences, Leninskii Pr. 53, Moscow 119991, Russia}

\begin{abstract}
Entanglement degradation in open quantum systems is reviewed in
the Choi-Jamio{\l}kowski representation of linear maps. In
addition to physical processes of entanglement dissociation and
entanglement annihilation, we consider quantum dynamics
transforming arbitrary input states into those that remain
positive under partial transpose (PPT-inducing channels). Such
evolutions form a convex subset of distillation-prohibiting
channels. A relation between the above channels and
entanglement-binding ones is clarified. An example of the
distillation-prohibiting map $\Phi\otimes\Phi$ is given, where
$\Phi$ is not entanglement binding.
\end{abstract}

\pacs{03.67.Mn, 03.65.Ud, 03.65.Yz}

\maketitle

\section{\label{introduction} Introduction}
 The phenomenon of quantum entanglement usually emerges
between interacting subsystems in a composite system. The
long-distance and long-living forms of such correlations play a
vital role in up-to-date quantum
technologies~\cite{horodecki-2009}. Unavoidable interaction with
the environment changes the structure of
entanglement~\cite{aolita-2014}. In special cases, the global
environment bath can create quantum correlations between the
particles of the composite system~\cite{palma-1989}, however,
local noises degrade it and impose limitations on achievable
entanglement death time in
experiments~\cite{almeida-2007,bellomo-2007}. Bound entangled
states are those that are still entangled but cannot be distilled
into maximally entangled qubit pairs~\cite{horodecki-1998}.
Positive partial transpose (PPT) states~\cite{peres-1996} are
known to be undistillable, so it is reasonable to characterize
quantum channels resulting in PPT states regardless of the input.

To describe the stages of gradual entanglement degradation, the
notions of entanglement
annihilation~\cite{moravcikova-ziman-2010,filippov-rybar-ziman-2012}
and entanglement dissociation~\cite{filippov-melnikov-ziman-2013}
were introduced recently. PPT-inducing and
distillation-prohibiting channels can be considered as their
generalizations to corresponding entanglement properties (PPT and
undistillability). The previously introduced notion of
entanglement-binding channel~\cite{horodecki-2000} turns out to be
a partial case (one-sided realization) of the
distillation-prohibiting channel.

The paper is organized as follows.

In Sec.~\ref{section:2}, the description of quantum entanglement
is briefly reviewed. In Sec.~\ref{section:3}, the basic
information about quantum channels and their entanglement
degradation properties is given. In Sec.~\ref{section:4}, the
Choi-Jamio{\l}kowski
representation~\cite{pillis-1967,jamiolkowski-1972,choi-1975} of
linear maps and their concatenations is discussed, the structure
of Choi matrix is reviewed for entanglement-breaking,
entanglement-binding, entanglement-annihilating, and
entanglement-dissociating channels. Main results are presented in
Sec.~\ref{section:5}, where properties of PPT-inducing and
distillation-prohibiting channels are studied. In
Sec.~\ref{section:6}, brief conclusions are given.

\section{\label{section:2} Quantum entanglement}
 A quantum state is described by a density operator $\varrho$,
which is Hermitian, positive semidefinite, and has unit trace. In
a composite system, there are several degrees of freedom, say,
$A,B,C,\ldots$, and the corresponding Hilbert space has the tensor
product structure
$\mathcal{H}^A\otimes\mathcal{H}^B\otimes\mathcal{H}^C\otimes\cdots$.
Specification of particular degrees of freedom fixes the
partitioning structure $A|B|C|\cdots$ and the associated
entanglement structure \footnote{For example, a two-body system
can be described by either coordinates of individual particles
${\bf r}_1$ and ${\bf r}_2$, or the center-of-mass coordinate
${\bf R} = \frac{m_1{\bf r}_1 + m_2{\bf r}_2}{m_1+m_2}$ and the
relative coordinate ${\bf r} = {\bf r}_2 - {\bf r}_1$. The ground
state of the hydrogen atom is entangled with respect to the first
partition and separable with respect to the second
one~\cite{dugic}}. In experiments and applications, one deals with
the accessible degrees of freedom that are naturally related to
the used measurement techniques. From now on, these degrees of
freedom are supposed to be fixed.

The state $\varrho^{ABC\cdots}$ is called fully separable if there
exist a probability distribution $\{p_k\}$ and density operators
$\varrho_k^A, \varrho_k^B, \varrho_k^C, \ldots$ such that
$\varrho^{ABC\cdots}$ belongs to the closure of the states $\sum_k
p_k \varrho_k^A \otimes \varrho_k^B \otimes \varrho_k^C \otimes
\cdots$; otherwise, $\varrho^{ABC\cdots}$ is called
entangled~\cite{werner-1989}.

An example of a coarse-grained partition is
$\mathcal{P}=AB|C|DEF$. The state $\varrho^{ABCDEF}$ is separable
with respect to $\mathcal{P}$ if $\varrho^{ABCDEF} = \sum_k p_k
\varrho_k^{AB} \otimes \varrho_k^C \otimes \varrho_k^{DEF}$. A
fully separable state is separable with respect to $AB|C|DEF$ but
the converse does not hold in general. It is instructive to remind
about the existence of a three-qubit state $\varrho^{ABC}$, which
is separable with respect to all bipartitions ($A|BC$, $B|AC$, and
$C|AB$) but is not fully separable, i.e. entangled with respect to
tripartition $A|B|C$~\cite{bennet-1999,divincenzo-2003}.

A state, which cannot be written as a convex sum of any separable
bipartite states, is usually referred to as genuinely entangled.
Under the action of local noises, the genuine entanglement
degrades at first to the biseparable form (convex sum of states
separable with respect to bipartitions), then to the triseparable
form (convex sum of states separable with respect to
tripartitions), and so on. Such a process is called entanglement
dissociation~\cite{filippov-melnikov-ziman-2013} in analogy to the
dissociation of chemical compounds in solvents. The final stage of
entanglement evolution is annihilation, when the system state
becomes fully separable.

A bipartite state $\varrho^{AB}$ is called positive under partial
transpose (PPT) if $\varrho^{A(B)^{\top}} \equiv {\rm Id}^A
\otimes T^B[\varrho^{AB}]$ is positive
semidefinite~\cite{peres-1996}. Here, ${\rm Id}$ is the identity
transformation and $T^B$ is the transposition in some orthonormal
basis in $\mathcal{H}^B$. Clearly, $\varrho^{A(B)^{\top}} \ge 0
\Longleftrightarrow \varrho^{(A)^{\top}B} \ge 0$ because
$\varrho^{(AB)^{\top}}$ is a valid density operator and $T^2 =
{\rm Id}$. A separable state $\varrho^{AB}$ is necessarily
PPT~\cite{peres-1996}, the converse holds if $d^A d^B \le
6$~\cite{horodecki-1996}. PPT entangled states are known to be
undistillable~\cite{horodecki-1998,horodecki-2001}, which
motivates us to study PPT-inducing quantum evolutions.

\section{\label{section:3} Quantum channels}

 Evolution of an open quantum system during the time interval
$(0,T)$ can be considered as an input-output relation
$\varrho_{t=T} = \Phi [\varrho_{t=0}]$ between the final and
initial density matrices. If the system is decoupled from the
environment at time $t=0$, then $\Phi$ is a single-valued linear
map. Physical reasoning leads to a conclusion that $\Phi$ is a
completely positive trace preserving
map~\cite{stinespring,sudarshan-1961}, which we will refer to as
quantum channel (see, e.g.,~\cite{breuer-petruccione}).

Suppose a channel $\Phi$ acting on the system $S=ABC\ldots$ and
the initial state $\varrho_{\rm in}$. If the output state
$\varrho_{\rm out} = \Phi[\varrho_{\rm in}]$ is separable with
respect to a fixed partition $\mathcal{P}$ for all $\varrho_{\rm
in}$, then $\Phi$ is called entanglement-dissociating with respect
to $\mathcal{P}$. If $\mathcal{P}=A|B|C|\cdots$, i.e.
$\varrho_{\rm out}$ is fully separable for all input states, then
$\Phi$ is called
entanglement-annihilating~\cite{filippov-melnikov-ziman-2013}.

Notions of entanglement dissociation and annihilation do not imply
the use of any auxiliary systems. In contrast, the so called
entanglement breaking channels are those for which
$\Phi\otimes{\rm Id}[\varrho_{\rm in}^{S+{\rm anc}}]$ is separable
with respect to partition $S|{\rm anc}$ for any initial density
operator $\varrho_{\rm in}^{S+{\rm anc}}$, with the dimension of
an ancillary Hilbert space being
arbitrary~\cite{holevo-1998,horodecki-2003,holevo-2008}.
Equivalently, $\Phi^S$ is entanglement-breaking if
$\Phi^S\otimes{\rm Id}^{\rm anc}$ is entanglement dissociating
with respect to the bipartition $S|{\rm anc}$ for all dimensions
${\rm dim}\mathcal{H}^{\rm anc} =2,3,\ldots$.

Similarly, if $\Phi\otimes{\rm Id}[\varrho_{\rm in}^{S+{\rm
anc}}]$ is undistillable (with respect to the system $S$ and the
ancillary system) for any initial density operator $\varrho_{\rm
in}^{S+{\rm anc}}$, then $\Phi$ is called entanglement
binding~\cite{horodecki-2000}.

\section{\label{section:4} Choi-Jamio{\l}kowski representation}
 In case of finite dimensions, a linear map $\Phi$ acting on a
system $S$ can be defined via the so-called Choi-Jamio{\l}kowski
isomorphism~\cite{pillis-1967,jamiolkowski-1972,choi-1975}:
\begin{eqnarray}
&& \label{choi-matrix} \Omega_{\Phi}^{SS'} = \Phi^S \otimes {\rm
Id}^{S'} [\ket{\Psi_+^{SS'}}\bra{\Psi_+^{SS'}}], \\
&& \label{map-through-choi} \Phi [X] = d^S \, {\rm tr}_{S'} \, [
\,\Omega_{\Phi}^{SS'} (I_{\rm out}^S \otimes X^{\rm T}) \, ],
\end{eqnarray}

\noindent where $d^S={\rm dim} \mathcal{H}^S = {\rm dim}
\mathcal{H}^{S'}$, $\ket{\Psi_+^{SS'}} = (d^S)^{-1/2}
\sum_{i=1}^{d^S} \ket{i\otimes i'}$ is a maximally entangled state
shared by system $S$ and its clone $S'$, ${\rm tr}_{S'}$ denotes
the partial trace over $S'$, $I$ is the identity operator, and
$X^{\rm T}=\sum_{i,j} \bra{j} X \ket{i} \ket{i'}\bra{j'}$ is the
transposition in some orthonormal basis. The
operator~\eqref{choi-matrix} is referred to as Choi operator or
Choi state (see Fig.~\ref{figure}).

\begin{figure*}
\includegraphics[width=17cm]{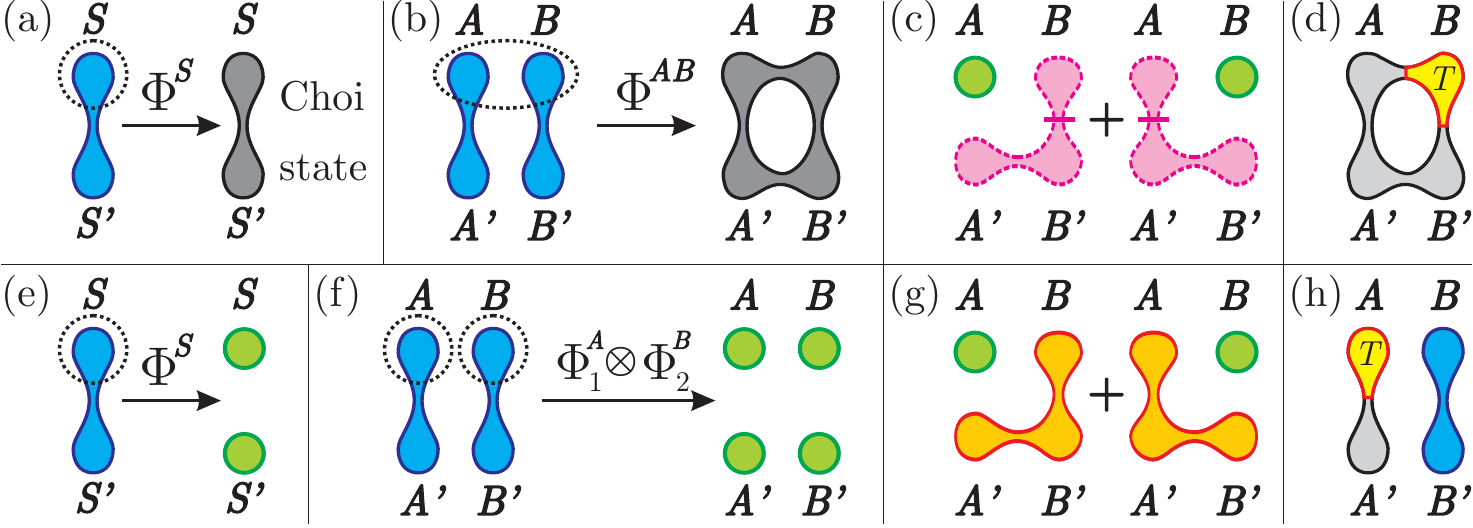}
\caption{\label{figure} Choi operator $\Omega_{\Phi}^{SS'}$ as a
result of applying $\Phi^S$ to one side of the state
$\ket{\Psi_{+}^{SS'}}$ maximally entangled between system $S$ and
its clone $S'$ (a). Choi operator for a map $\Phi^{AB}$ acting on
a composite system $AB$ (b). Visualization of property 5 for Choi
matrices of entanglement-annihilating channels (c). Visualization
of Corollary~\ref{corollary-1} for PPT-inducing channels (d). Choi
state of entanglement-breaking channel is separable, Choi state of
entanglement binding channel is undistillable (e). Choi state of
local entanglement-breaking channel $\Phi_1 \otimes \Phi_2$ is
fully separable (f). Visualization of property 6 --- sufficient
condition of entanglement annihilation --- in terms of Choi
matrices (g). Visualization of Proposition~\ref{proposition-2}
depicting the Choi matrix of PPT-inducing channel
$\Phi^A\otimes{\rm Id}^B$ (h).}
\end{figure*}

\subsection{Review of properties}

The well known results are as follows:
\begin{enumerate}
\item $\Phi$ is positive if and only if $\Omega_{\Phi}^{SS'}$ is
block-positive, i.e. $\bra{\varphi^S \otimes \chi^{S'}}
\Omega_{\Phi}^{SS'} \ket{\varphi^S \otimes \chi^{S'}} \ge 0$ for
all $\varphi,\chi$~\cite{jamiolkowski-1972}.

\item $\Phi$ is completely positive (quantum operation) if and
only if $\Omega_{\Phi}^{SS'} \ge 0$~\cite{choi-1975}.

\item $\Phi$ is entanglement breaking if and only if
$\Omega_{\Phi}^{SS'} \ge 0$ is separable with respect to partition
$S|S'$~\cite{horodecki-2003}.

\item $\Phi$ is entanglement binding if and only if
$\Omega_{\Phi}^{SS'} \ge 0$ is undistillable (with respect to
parties $S$ and $S'$)~\cite{horodecki-2000}.
\end{enumerate}

A feature of entanglement breaking maps is that their outcome is
separable not only for positive inputs (density operators)
$\varrho^{S+{\rm anc}}$ but also for block-positive inputs
$\xi_{\rm BP}^{S|{\rm anc}}$. Since ${\rm tr}[\Omega_{\rm pos}
\Omega_{\rm ent-br}] \ge 0$, the cone of entanglement breaking
maps is dual to the cone of positive maps (see,
e.g.,~\cite{bengtsson-zyczkowski,johnston-2012}).

As far as multipartite composite systems $S=ABC\ldots$ are
concerned, the maximally entangled state can be viewed as
separable with respect to the partition $AA'|BB'|CC'|\ldots$,
namely, $\ket{\Psi_+^{SS'}} = (d^A d^B d^C \cdots )^{-1/2}
\sum_{i=1}^{d^A} \sum_{j=1}^{d^B} \sum_{k=1}^{d^C} \sum_{\cdots}
\ket{ijk\cdots}\otimes \ket{i'j'k'\cdots} = \ket{\Psi_+^{AA'}}
\otimes \ket{\Psi_+^{BB'}} \otimes \ket{\Psi_+^{CC'}} \otimes
\cdots$.

If $\Phi$ is a local map, i.e. has the form $\Phi_1^A \otimes
\Phi_2^B \otimes \Phi_3^C \otimes \cdots$, then the composite Choi
operator reads $\Omega_{\Phi_1 \otimes \Phi_2 \otimes \Phi_3
\otimes \ldots}^{ABC \ldots A'B'C' \ldots} = \Omega_{\Phi_1}^{AA'}
\otimes \Omega_{\Phi_2}^{BB'} \otimes \Omega_{\Phi_3}^{CC'}
\otimes \cdots$. Positivity of this operator is equivalent to
positivity of individual Choi operators. By property 2, $\Phi_1^A
\otimes \Phi_2^B \otimes \Phi_3^C \otimes \cdots$ is completely
positive if and only if each of the maps $\Phi_1^A$, $\Phi_2^B$,
$\Phi_3^C$, $\ldots$ is completely positive. Similarly, by
properties 3 and 4, $\Phi_1^A \otimes \Phi_2^B \otimes \Phi_3^C
\otimes \cdots$ is entanglement breaking (binding) if and only if
each of the maps $\Phi_1^A$, $\Phi_2^B$, $\Phi_3^C$, $\ldots$ is
entanglement breaking (binding). Nonetheless, property 1 cannot be
extended in analogous way. In fact, the map $\Phi_1^A \otimes
\Phi_2^B$ is positive if and only if $\Omega_{\Phi_1}^{AA'}
\otimes \Omega_{\Phi_2}^{BB'} = \xi_{\rm BP}^{AB|A'B'}$, i.e. the
composite Choi matrix is block positive with respect to partition
$AB|A'B'$.

Suppose a positive map $\Upsilon$ acting on a composite system
$ABC\ldots$. $\Upsilon$ dissociates entanglement with respect to
some partition $\mathcal{P}(ABC\ldots)$ if and only if
\begin{equation}
\label{PEA} {\rm tr} \left[ \Omega_{\Upsilon}^{ABC\ldots
A'B'C'\ldots} \left( \xi_{\rm BP}^{\mathcal{P}(ABC\ldots)} \otimes
\varrho^{A'B'C'\ldots} \right) \right] \ge 0
\end{equation}

\noindent for all block-positive $\xi_{\rm
BP}^{\mathcal{P}(ABC\ldots)}$ and density operators
$\varrho^{A'B'C'\ldots}$. This result follows immediately from
Eq.~\eqref{map-through-choi}~\cite{filippov-melnikov-ziman-2013}.
However, in order to describe a valid quantum evolution,
$\Upsilon$ has to be completely positive, i.e.
$\Omega_{\Upsilon}^{ABC\ldots A'B'C'\ldots} \ge 0$. Choi matrices
that satisfy this condition and requirement~\eqref{PEA} are
precisely entanglement dissociating channels with respect to the
partition $\mathcal{P}$.

The bipartite setting of Eq.~\eqref{PEA} reads ${\rm tr} [
\Omega_{\Phi}^{ABA'B'} (\xi_{\rm BP}^{A|B} \otimes \varrho^{A'B'})
] \ge 0$ and specifies entanglement annihilating maps $\Phi^{AB}$.
The last formula shows that a cone of entanglement annihilating
maps is dual to the cone of maps $\Theta^{AB}$ of the form
$\Theta[X] = \sum_{k} \tr{F_k X} \xi_{{\rm BP}~k}^{A|B}$, $F_k \ge
0$.

A map $\Phi^{\dag}$ is called dual to the map $\Phi$ if ${\rm
tr}\left[ \Phi^{\dag}[X] Y \right] \equiv {\rm tr}\left[ X \Phi[Y]
\right]$ for $X,Y$ from corresponding domains\footnote{Mind the
difference between the notion of dual cones of maps and the notion
of dual maps.}. An observation for the entanglement annihilating
map $\Phi^{AB}$ is that $\Phi^{\dag}$ transforms all
block-positive operators $\xi_{\rm BP}^{A|B}$ into positive ones
($\propto \varrho^{AB}$). As a consequence, the concatenation
$\Phi \circ \Phi^{\dag}$ has to map block-positive operators to
separable ones (this is a necessary condition for $\Phi^{AB}$ to
be entanglement annihilating).

Finally, sufficient conditions for entanglement annihilation are
as follows~\cite{filippov-ziman-2013}:
\begin{enumerate}
\item[5.] If $\Omega_{\Phi}^{ABA'B'}$ can be represented in the
form of a convex sum of operators $\xi_{\rm BP}^{A|A'B'} \otimes
\varrho^{B}$ and $\varrho^{A} \otimes \xi_{\rm BP}^{B|A'B'}$, then
$\Phi^{AB}$ annihilates entanglement.

\item [6.]  If $\Omega_{\Phi}^{ABA'B'}$ is a convex sum of
separable states of the form $\varrho^{A|BA'B'}$ and
$\varrho^{B|AA'B'}$, then $\Phi^{AB}$ is entanglement
annihilating.
\end{enumerate}

Some key properties of Choi operators for the discussed maps are
depicted schematically in Fig.~\ref{figure}.

\subsection{Concatenation of maps in terms of Choi operators}
 A concatenation $\Phi \circ \Xi$ of two maps $\Phi:
\mathcal{M}_d \mapsto \mathcal{M}_d$ and $\Xi: \mathcal{M}_d
\mapsto \mathcal{M}_d$ is a map such that $\Phi \circ \Xi [X]
\equiv \Phi \left[\Xi [X]\right]$. It is not hard to see that
\begin{equation}
\label{star} \Omega_{\Phi \circ \Xi} = d \!\!\!\!\!\!
\sum_{i,j,k,l,m,n=1}^d \!\!\!\!\!\! \ket{m \otimes k} \bra{m
\otimes i} \Omega_{\Phi} \ket{n \otimes j} \bra{i \otimes k}
\Omega_{\Xi} \ket{j \otimes l} \bra{n \otimes l}.
\end{equation}

\noindent The rule~\eqref{star} can be treated as a star-product
scheme~\cite{manko-2002,manko-2007}, where the symbols are
elements of Choi matrices. The kernel of the star product reads
$K(mk,nl;pq,rs;tu,vw)=\delta_{mp}\delta_{qt}\delta_{rn}\delta_{sv}\delta_{ku}\delta_{wl}$.

\section{\label{section:5} PPT-inducing and distillation-prohibiting channels}
 A map $\Phi^{AB}$ transforming operators acting on
$\mathcal{H}^A\otimes\mathcal{H}^B$ is called PPT-inducing if
$\Phi^{AB}[\varrho^{AB}]$ is positive and PPT with respect to the
partition $A|B$ for all input states $\varrho^{AB}$. If in
addition $\Phi^{AB}$ is completely positive and trace preserving,
then $\Phi^{AB}$ is the PPT-inducing quantum channel.

\begin{proposition}
\label{proposition-1} A positive map $\Phi^{AB}$ is PPT-inducing
if and only if $\Omega_{\Phi}^{A(B)^{\top}A'B'}$ is block-positive
with respect to the partition $AB|A'B'$.
\end{proposition}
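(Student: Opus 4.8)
The plan is to translate the defining property of a PPT-inducing map---that $\big(\Phi^{AB}[\varrho^{AB}]\big)^{A(B)^{\top}} \ge 0$ for every input state $\varrho^{AB}$---into a statement about $\Omega_{\Phi}$ by feeding the map-through-Choi relation~\eqref{map-through-choi} into the positivity test for the output. Since positivity of the partial transpose is an operator inequality on $\mathcal{H}^A \otimes \mathcal{H}^B$, it is equivalent to $\langle \xi^{AB} | \big(\Phi[\varrho]\big)^{A(B)^{\top}} | \xi^{AB}\rangle \ge 0$ for every vector $\xi^{AB}$, not merely product ones. The single tool needed throughout is the self-duality of the partial transpose under the trace, $\Tr\big[M^{A(B)^{\top}} N\big] = \Tr\big[M\, N^{A(B)^{\top}}\big]$, which lets me shuttle the transposition between the test vector and the Choi operator.

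First I would write the output as $\Phi^{AB}[\varrho^{AB}] = d^A d^B \, \Tr_{A'B'}\!\big[\Omega_{\Phi}^{ABA'B'}\big(I^{AB} \otimes (\varrho^{A'B'})^{\top}\big)\big]$ according to~\eqref{map-through-choi}, the transpose of the input living on the clone space $A'B'$. Substituting this into $\langle \xi^{AB}|\big(\Phi[\varrho]\big)^{A(B)^{\top}}|\xi^{AB}\rangle = \Tr\big[(\ket{\xi^{AB}}\bra{\xi^{AB}})^{A(B)^{\top}}\,\Phi[\varrho]\big]$ and collapsing the partial trace over $A'B'$ together with the outer trace over $AB$ into a single trace over $ABA'B'$ yields $d^A d^B\,\Tr\big[\Omega_{\Phi}^{ABA'B'}\big((\ket{\xi^{AB}}\bra{\xi^{AB}})^{A(B)^{\top}} \otimes (\varrho^{A'B'})^{\top}\big)\big]$. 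Because the transposition on $B$ acts only on a factor common to $\Omega_{\Phi}$ and $\ket{\xi}\bra{\xi}$ while leaving the $A'B'$ factor untouched, the trace identity moves it onto the Choi operator, giving $d^A d^B\,\Tr\big[\Omega_{\Phi}^{A(B)^{\top}A'B'}\big(\ket{\xi^{AB}}\bra{\xi^{AB}} \otimes (\varrho^{A'B'})^{\top}\big)\big] \ge 0$.

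It then remains to identify this nonnegativity condition with block positivity. As $\varrho^{A'B'}$ runs over all input states it suffices, by linearity, to take it pure, $\varrho = \ket{\zeta}\bra{\zeta}$, so that $(\varrho^{A'B'})^{\top} = \ket{\chi^{A'B'}}\bra{\chi^{A'B'}}$ with $\chi = \zeta^{\ast}$ ranging over all of $\mathcal{H}^{A'B'}$; since $d^A d^B > 0$ the prefactor is irrelevant. The condition thus becomes $\langle \xi^{AB} \otimes \chi^{A'B'} | \, \Omega_{\Phi}^{A(B)^{\top}A'B'} \, | \xi^{AB} \otimes \chi^{A'B'} \rangle \ge 0$ for all $\xi^{AB}$ and $\chi^{A'B'}$, which is precisely block positivity of $\Omega_{\Phi}^{A(B)^{\top}A'B'}$ across the cut $AB|A'B'$ in the sense of property~1. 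Reading the chain of equivalences in both directions establishes the proposition, the hypothesis that $\Phi$ is positive supplying the \emph{positive} half of ``positive and PPT'' so that the equivalence genuinely concerns the partial-transpose half. I expect the only delicate point to be bookkeeping: keeping straight that the input transposition lands on $A'B'$ while the partial transpose defining the PPT property acts on $B$, so that the two transpositions live on disjoint factors and may be handled independently. Once that is fixed, the argument is simply the partial-transpose-twisted analogue of the standard equivalence between positivity of a map and block positivity of its Choi operator.
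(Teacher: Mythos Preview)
Your argument is correct and follows essentially the same route as the paper's own proof: you express the output via Eq.~\eqref{map-through-choi}, test positivity of its partial transpose against an arbitrary state on $AB$, shift the $B$-transpose onto the Choi operator using the self-duality of partial transposition under the trace, and identify the resulting condition with block positivity across $AB|A'B'$. The only cosmetic difference is that the paper applies the partial transpose directly to~\eqref{map-through-choi} and tests against density operators $\widetilde{\varrho}^{AB}$, whereas you test against pure vectors $\ket{\xi^{AB}}$ and move the transpose via the trace identity---these are equivalent, and your remark that the positivity hypothesis on $\Phi$ supplies the ``positive'' half of ``positive and PPT'' matches the paper's comment that this hypothesis is essential.
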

\begin{proof}
Partial transposition of formula~\eqref{map-through-choi} yields
$\varrho_{\rm out}^{A(B)^{\top}} = {\rm tr}_{A'B'}\left[
\Omega_{\Phi}^{A(B)^{\top}A'B'} (I^{AB} \otimes \varrho_{\rm
in}^{(AB)^{\top}}) \right]$. Positivity of $\varrho_{\rm
out}^{A(B)^{\top}}$ means ${\rm tr}[\varrho_{\rm
out}^{A(B)^{\top}} \widetilde{\varrho}^{AB}] \ge 0$ for all
density operators $\widetilde{\varrho}^{AB}$, which is equivalent
to ${\rm tr} \left[ \Omega_{\Phi}^{A(B)^{\top}A'B'}
(\widetilde{\varrho}^{AB} \otimes \varrho_{\rm in}^{(AB)^{\top}})
\right] \ge 0$ for any $\widetilde{\varrho}^{AB}$ and
$\varrho_{\rm in}^{AB}$, i.e. block-positivity of
$\Omega_{\Phi}^{A(B)^{\top}A'B'}$ with respect to the partition
$AB|A'B'$.
\end{proof}

Apparently, one could use $\Omega_{\Phi}^{(A)^{\top}BA'B'}$
instead of $\Omega_{\Phi}^{A(B)^{\top}A'B'}$ in the formulation of
Proposition~\ref{proposition-1}. Besides, the positivity of map
$\Phi^{AB}$ is essential.

Since positive operators are automatically block-positive, we
readily obtain the following result.

\begin{corollary}
\label{corollary-1} Suppose a channel $\Phi^{AB}$ such that
$\Omega_{\Phi}^{A(B)^{\top}A'B'} \ge 0$, then $\Phi^{AB}$ is
PPT-inducing.
\end{corollary}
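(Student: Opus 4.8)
The plan is to obtain this as an immediate consequence of Proposition~\ref{proposition-1}. The corollary merely weakens the hypothesis of that proposition---from block-positivity of $\Omega_{\Phi}^{A(B)^{\top}A'B'}$ to full operator positivity---so the whole task reduces to relating these two notions and to checking that the positivity premise of Proposition~\ref{proposition-1} is still in force.

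First I would verify that premise. By assumption $\Phi^{AB}$ is a channel, hence completely positive and in particular a positive map, so Proposition~\ref{proposition-1} is applicable to it. No extra work is needed here, since complete positivity trivially entails positivity of the map.

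The one substantive step is the remark flagged just before the statement: every positive semidefinite operator is automatically block-positive. Explicitly, if $\Omega_{\Phi}^{A(B)^{\top}A'B'} \ge 0$, then $\bra{\psi}\Omega_{\Phi}^{A(B)^{\top}A'B'}\ket{\psi} \ge 0$ for \emph{all} vectors $\ket{\psi}$; specializing $\ket{\psi}$ to product vectors of the form $\ket{\varphi^{AB}}\otimes\ket{\chi^{A'B'}}$ adapted to the cut $AB|A'B'$ reproduces exactly the inequality defining block-positivity with respect to that partition. Hence operator positivity implies the block-positivity required by Proposition~\ref{proposition-1}.

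Feeding these two facts into Proposition~\ref{proposition-1} then yields that $\Phi^{AB}$ is PPT-inducing. I anticipate no genuine obstacle: the only points to keep straight are that the word ``channel'' already supplies the positivity hypothesis of Proposition~\ref{proposition-1}, and that product vectors form a subset of all vectors, so that block-positivity is simply a restriction of the operator-positivity condition. Both observations are immediate, which is precisely why the statement appears as a corollary rather than as a separate proposition.
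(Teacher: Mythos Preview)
Your argument is correct and mirrors the paper's own reasoning: it simply notes that positive operators are automatically block-positive and invokes Proposition~\ref{proposition-1}, with the channel assumption supplying the required positivity of $\Phi^{AB}$. There is nothing to add.
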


Another sufficient condition can be found by using the map
decomposition technique developed in~\cite{filippov-ziman-2013}.
In fact, {\it $\Phi^{AB}$ is PPT-inducing if it can be decomposed
into $\Phi^{AB} = \sum_k (\mathcal{O}_k^A \otimes {\rm Id}^B)
\circ \Lambda_k^{AB}$, where the maps $\Lambda_k^{AB}$ are
positive and the operations $\mathcal{O}_k^A \otimes {\rm Id}^B$
are PPT-inducing.} This fact can be also proven by the
rule~\eqref{star} and Proposition~\ref{proposition-1}.

\begin{proposition}
\label{proposition-2} A channel $\Phi^{A}\otimes {\rm Id}^B$ with
${\rm dim}\mathcal{H}^B  \ge {\rm dim}\mathcal{H}^A$ is
PPT-inducing if and only if $\Omega_{\Phi}^{AA'}$ is PPT.
\end{proposition}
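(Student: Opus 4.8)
The plan is to reduce the statement to Proposition~\ref{proposition-1} by computing the composite Choi operator explicitly and then exploiting the dimension inequality at the very last step. Since $\mathrm{Id}^B$ has Choi operator $\ket{\Psi_+^{BB'}}\bra{\Psi_+^{BB'}}$, the local-map rule gives $\Omega_{\Phi\otimes\mathrm{Id}}^{ABA'B'}=\Omega_\Phi^{AA'}\otimes\ket{\Psi_+^{BB'}}\bra{\Psi_+^{BB'}}$. I would invoke the $(A)^{\top}$-variant of Proposition~\ref{proposition-1} noted in the remark (rather than the $(B)^{\top}$ version, which would transpose the maximally entangled projector into the indefinite swap operator): $\Phi^A\otimes\mathrm{Id}^B$ is PPT-inducing if and only if $M\otimes\ket{\Psi_+^{BB'}}\bra{\Psi_+^{BB'}}$ is block-positive with respect to $AB|A'B'$, where $M:=(\Omega_\Phi^{AA'})^{(A)^{\top}}$. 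The partial transpose hits only the $\Omega_\Phi^{AA'}$ factor, leaving the projector intact. Being a channel, $\Phi^A\otimes\mathrm{Id}^B$ is completely positive, so the positivity hypothesis of Proposition~\ref{proposition-1} is automatically met.

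Next I would evaluate block-positivity on an arbitrary product test vector $\ket{\varphi^{AB}}\otimes\ket{\chi^{A'B'}}$. Writing $\varphi_{ab}$ and $\chi_{a'b'}$ for the components and contracting the $B,B'$ indices through the maximally entangled projector forces $b'=b$ on the bra side and the matching identification on the ket side (the standard ``ricochet'' identity for $\ket{\Psi_+^{BB'}}$). After this contraction the quadratic form collapses to $\tfrac{1}{d^B}\bra{G}M\ket{G}$, where $\ket{G}=\sum_{a,a'}C_{aa'}\ket{a}\otimes\ket{a'}$ and $C=FX^{\top}$, with $F$ and $X$ the $d^A\times d^B$ coefficient matrices of $\varphi$ and $\chi$ respectively. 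Thus block-positivity is equivalent to $\bra{G}M\ket{G}\ge 0$ for every $\ket{G}$ attainable in the form $C=FX^{\top}$.

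The decisive step — and the point where the hypothesis $d^B\ge d^A$ enters — is to identify which $\ket{G}$ are reachable. Since $C=FX^{\top}$ is a product of a $d^A\times d^B$ matrix with a $d^B\times d^A$ matrix, the attainable $C$ are exactly the matrices of rank at most $\min(d^A,d^B)$. When $d^B\ge d^A$ this is no restriction at all: every $d^A\times d^A$ matrix $C$ is realized, for instance by placing $C$ in the first $d^A$ columns of $F$ and an identity block in the first $d^A$ columns of $X$. Hence $\ket{G}$ ranges over all of $\mathcal{H}^A\otimes\mathcal{H}^{A'}$, so block-positivity upgrades to full positivity $M\ge 0$, i.e. $(\Omega_\Phi^{AA'})^{(A)^{\top}}\ge 0$, which is precisely the PPT property of $\Omega_\Phi^{AA'}$. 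The converse is immediate: if $M\ge 0$ then the quadratic form is nonnegative on every reachable vector, so the block-positivity condition holds and $\Phi^A\otimes\mathrm{Id}^B$ is PPT-inducing.

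I expect the main obstacle to be this reachability argument rather than the index contraction: one must verify carefully that the rank ceiling imposed by $C=FX^{\top}$ is exactly $\min(d^A,d^B)$ and that $d^B\ge d^A$ removes it completely. This is where the dimension hypothesis is genuinely used — were $d^B<d^A$, the product test would only certify nonnegativity of $M$ on vectors of Schmidt rank $\le d^B$, a strictly weaker ($d^B$-block-positivity) condition that does not force $M\ge 0$, so the equivalence with the plain PPT property would fail.
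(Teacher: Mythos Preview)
Your argument is correct. The contraction through the maximally entangled projector giving $\tfrac{1}{d^B}\bra{G}M\ket{G}$ with $C=FX^{\top}$ is accurate, and the reachability analysis (rank of a $d^A\times d^B$ times $d^B\times d^A$ product being capped at $\min(d^A,d^B)$, hence unrestricted when $d^B\ge d^A$) is exactly the right way to isolate where the dimension hypothesis enters.

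The paper proceeds somewhat differently. For sufficiency it does essentially what you do in your ``converse'' sentence, but phrases it through Corollary~\ref{corollary-1}: if $\Omega_\Phi^{AA'}$ is PPT then $\Omega_{\Phi\otimes\mathrm{Id}}^{(A)^{\top}BA'B'}=\Omega_\Phi^{(A)^{\top}A'}\otimes\ket{\Psi_+^{BB'}}\bra{\Psi_+^{BB'}}\ge 0$, and full positivity is stronger than block-positivity. For necessity, rather than analysing the full range of reachable $\ket{G}$, the paper simply feeds the channel a single input --- the maximally entangled state $\ket{\Psi_+^{AB}}$ (which requires $d^B\ge d^A$ to have Schmidt rank $d^A$) --- and observes that the output is $\Omega_\Phi^{AA'}$ sitting inside the larger $B$-space, so PPT-inducing forces $\Omega_\Phi^{AA'}$ to be PPT. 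In your language this is the special choice $F=[I\,|\,0]$, $X=[I\,|\,0]$ giving $C=I$. Your route is a bit longer but more transparent about \emph{why} the dimension bound is needed and what weaker conclusion ($d^B$-block-positivity of $M$) would survive without it; the paper's route is quicker but leaves that structure implicit.
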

\begin{proof}
Necessity follows from the fact that the maximally entangled state
$\ket{\Psi_{+}^{AB}}$ should become PPT when acted upon by
$\Phi^{A}\otimes {\rm Id}^B$. Identification of the proper
$d^A$-dimensional subspace of $\mathcal{H}^B$ and
$\mathcal{H}^{A'}$ implies $\Omega_{\Phi}^{AA'}$ is PPT.
Sufficiency follows from the tensor product form of the Choi
operator, namely, $\Omega_{\Phi\otimes{\rm
Id}}^{ABA'B'}=\Omega_{\Phi}^{AA'}\otimes\ket{\Psi_{+}^{BB'}}\bra{\Psi_{+}^{BB'}}$,
which implies $\Omega_{\Phi\otimes{\rm
Id}}^{(A)^{\top}BA'B'}=\Omega_{\Phi}^{(A)^{\top}A'}\otimes\ket{\Psi_{+}^{BB'}}\bra{\Psi_{+}^{BB'}}
\ge 0$ because $\Omega_{\Phi}^{AA'}$ is PPT. Thus,
$\Omega_{\Phi\otimes{\rm Id}}^{(A)^{\top}BA'B'}$ is positive,
which guarantees the PPT-inducing behavior of the channel
$\Phi^{AB}$ by Corollary~\ref{corollary-1}.
\end{proof}

To some extent, Proposition~\ref{proposition-2} reproduces the
result of Ref.~\cite{horodecki-2000} and serves as a sufficient
condition for the map $\Phi$ to be entanglement binding. Examples
of PPT but entangled Choi states $\Omega_{\Phi}^{AA'}$ are given
in Ref.~\cite{horodecki-2000}.

A map $\Phi^{AB}$ that maps density operators to undistillable
ones is called distillation-prohibiting. As PPT states cannot be
distilled~\cite{horodecki-1998}, PPT-inducing maps form a convex
subset of distillation-prohibiting ones.

The following example illustrates the relation between local
entanglement binding channels and distillation-prohibiting ones.

\begin{example}
Consider a local channel $\Phi_q \otimes \Phi_q$ acting on two
qutrits, where $\Phi_q$ is depolarizing: $\Phi_q[X] = q X + (1-q)
{\rm tr}[X] \frac{I}{3}$, $q \ge - \frac{1}{8}$. Suppose
$\ket{\psi} = \sum_{i=1}^3 \sqrt{\lambda_i}
\ket{\varphi_i}\ket{\chi_i}$ is the Schmidt decomposition of an
input pure state $\ket{\psi}$, i.e. $\ip{\varphi_i}{\varphi_j} =
\ip{\chi_i}{\chi_j} = \delta_{ij}$, $0 \le \lambda_i \le 1$, and
$\sum_{i=1}^{3} \lambda_i = 1$. Written in the basis
$\{\ket{\varphi_i}\ket{\chi_j}\}$, the density operator $\Phi_q
\otimes \Phi_q[\ket{\psi}\bra{\psi}]$ is a sparse matrix, so one
can readily find eigenvalues of its partial transpose. For a fixed
$q$, the minimal possible eigenvalue is achieved if
$\lambda_1=\lambda_2=\frac{1}{2}$ and $\lambda_3=0$ (up to the
permutation of indexes). Exploring positivity of the minimal
eigenvalue, we find that $\Phi_q \otimes \Phi_q$ is PPT-inducing
if and only if $q \le \frac{1+\sqrt{3}}{4+\sqrt{3}} \approx
0.4766$. However, $\Phi_q$ is known to be entanglement binding if
and only if $q \le \frac{1}{4} =
0.25$~\cite{horodecki-isotropic-1999}. A gap between these two
values shows that a local channel $\Phi_1 \otimes \Phi_2$ can be
PPT-inducing (distillation-prohibiting) even if neither of
channels $\Phi_1$ or $\Phi_2$ is entanglement binding. \hfill
$\blacksquare$
\end{example}

Assuming that the state
$\ket{\psi}=\frac{1}{\sqrt{2}}(\ket{\varphi}\ket{\chi} +
\ket{\varphi_{\perp}}\ket{\chi_{\perp}})$ of Schmidt rank 2
minimizes the eigenvalues of the partial transpose of $\Phi_q
\otimes \Phi_q[\ket{\psi}\bra{\psi}]$ for a general depolarizing
map $\Phi_q: \mathcal{M}_d \mapsto \mathcal{M}_d$, $\Phi_q[X] = q
X + (1-q) {\rm tr}[X] \frac{I}{d}$, $q \ge -\frac{1}{d^2-1}$, we
readily obtain the following result.

{\bf Conjecture}. $\Phi_q \otimes \Phi_q$ is PPT-inducing if $q
\le \frac{1+\sqrt{3}}{d+1+\sqrt{3}}$.

On the other hand, $\Phi_q$ is entanglement binding if and only if
$q \le \frac{1}{d+1}$~\cite{horodecki-isotropic-1999}. The bound
on parameter $q$ in the above Conjecture was first found in
Ref.~\cite{filippov-ziman-2013} in connection with the search of
robust entangled states.

\section{\label{section:6} Conclusions}
 We have introduced the concepts of PPT-inducing and
distillation-prohibiting channels as those that act on a composite
system $AB$ and result in PPT and undistillable output states,
respectively. When a one-sided noisy process $\Phi^A \otimes {\rm
Id}^B$ is distillation-prohibiting, then our definition naturally
leads to the notion of entanglement binding channel $\Phi^A$. We
have characterized PPT-inducing channels and found necessary and
sufficient conditions for $\Phi^{AB}$ to be PPT-inducing in terms
of Choi operators; however, distillation-prohibiting channels
still need further analysis.

Also, we have demonstrated that $\Phi_1 \otimes \Phi_2$ can be
PPT-inducing (distillation-prohibiting) even if neither of
channels $\Phi_1$ or $\Phi_2$ is entanglement binding. All these
results show that entanglement binding channels are analogues of
entanglement breaking ones, whereas distillation-prohibiting
channels are analogues of entanglement annihilating ones: in both
cases the notion of entanglement is merely replaced by the notion
of distillation capability.

Recent progress in the description of the sets of PPT and
undistillable states~\cite{brandao-eisert-2008,augusiak-2010} may
stimulate a further characterization of PPT-inducing and
distillation-prohibiting channels. Future investigation of
continuous-variable systems may follow the same line of reasoning
as in Refs.~\cite{filippov-ziman-2013,filippov-ziman-2014} because
a necessary and sufficient condition for PPT continuous-variable
states is known~\cite{shchukin-2005}.

\begin{acknowledgments}
The author is grateful to M\'{a}rio Ziman for encouraging
comments. The study is partially supported by the Russian
Foundation for Basic Research under project No. 14-07-00937-a.
\end{acknowledgments}


\begin{thebibliography}{99}

\bibitem{horodecki-2009}
R. Horodecki, P. Horodecki, M. Horodecki, and K. Horodecki, {Rev.
Mod. Phys.} {\bf 81}, 865 (2009).

\bibitem{aolita-2014}
L. Aolita, F. de Melo, and L. Davidovich, ``Open-system dynamics
of entanglement,'' arXiv:1402.3713 [quant-ph] (2014).

\bibitem{palma-1989}
G. M. Palma and P. L. Knight, {Phys. Rev. A} {\bf 39}, 1962
(1989).

\bibitem{almeida-2007}
M. P. Almeida, F. de Melo, M. Hor-Meyll, A. Salles, S. P. Walborn,
P. H. S. Ribeiro, and L. Davidovich, {Science} {\bf 316}, 579
(2007).

\bibitem{bellomo-2007}
B. Bellomo, R. Lo Franco, and G. Compagno, {Phys. Rev. Lett.} {\bf
99}, 160502 (2007).

\bibitem{horodecki-1998}
M. Horodecki, P. Horodecki, and R. Horodecki, {Phys. Rev. Lett.}
{\bf 80}, 5239 (1998).

\bibitem{peres-1996} A. Peres, {Phys. Rev. Lett.}, {\bf 77} 1413
(1996).

\bibitem{moravcikova-ziman-2010}
L. Morav\v{c}\'{i}kov\'{a} and M. Ziman, {J. Phys. A: Math.
Theor.}, {\bf 43} 275306 (2010).

\bibitem{filippov-rybar-ziman-2012}
S. N. Filippov, T. Ryb\'{a}r, and M. Ziman, {Phys. Rev. A} {\bf
85}, 012303 (2012).

\bibitem{filippov-melnikov-ziman-2013}
S. N. Filippov, A. A. Melnikov, and M. Ziman, {Phys. Rev. A} {\bf
88}, 062328 (2013).

\bibitem{horodecki-2000}
P. Horodecki, M. Horodecki, and R. Horodecki, {J. Mod. Opt.} {\bf
47}, 347 (2000).

\bibitem{pillis-1967}
J. de Pillis, {Pacific J. of Math.} {\bf 23}, 129 (1967).

\bibitem{jamiolkowski-1972}
A. Jamio{\l}kowski, {Rep. Math. Phys.} {\bf 3}, 275 (1972).

\bibitem{choi-1975}
M.-D. Choi, {Linear Algebra Appl.} {\bf 10}, 285 (1975).

\bibitem{dugic}
J. Jeknic-Dugi\'{c}, M. Dugi\'{c}, A. Francom, M. Arsenijevi\'{c},
{Open Access Library Journal} {\bf 1}, e501 (2014).

\bibitem{werner-1989}
R. F. Werner, {Phys. Rev. A} {\bf 40}, 4277 (1989).

\bibitem{bennet-1999}
C. H. Bennett, D. P. DiVincenzo, T. Mor, P. W. Shor, J. A. Smolin,
and B. M. Terhal, {Phys. Rev. Lett.} {\bf 82}, 5385 (1999).

\bibitem{divincenzo-2003}
D. P. DiVincenzo, T. Mor, P. W. Shor, J. A. Smolin, and B. M.
Terhal, {Commun. Math. Phys.} {\bf 238}, 379 (2003).

\bibitem{horodecki-1996}
M. Horodecki, P. Horodecki, and R. Horodecki, {Phys. Lett. A} {\bf
223}, 1 (1996).

\bibitem{horodecki-2001}
P. Horodecki and R. Horodecki, {Quantum Information and
Computation} {\bf 1}, 45 (2001).

\bibitem{stinespring}
W. F. Stinespring, {Proc. Amer. Math. Soc.} {\bf 6}, 211 (1955).

\bibitem{sudarshan-1961}
E. C. G. Sudarshan, P. M. Mathews, and J. Rau, {Phys. Rev.} {\bf
121}, 920 (1961).

\bibitem{breuer-petruccione}
H.-P. Breuer and F. Petruccione, {\it The Theory of Open Quantum
Systems}, Oxford University Press, New York (2002).

\bibitem{holevo-1998}
A. S. Holevo, {Russ. Math. Surveys} {\bf 53}, 1295 (1998).

\bibitem{horodecki-2003}
M. Horodecki, P. W. Shor, and M. B. Ruskai, {Rev. Math. Phys.}
{\bf 15}, 629 (2003).

\bibitem{holevo-2008}
A. S. Holevo, {Probl. Inf. Transm.} {\bf 44}, 3 (2008).

\bibitem{bengtsson-zyczkowski}
I. Bengtsson and K. \.{Z}yczkowski, {\it Geometry of Quantum
States. An Introduction to Quantum Entanglement}, Cambridge
University Press, New York (2006).

\bibitem{johnston-2012}
N. Johnston, ``Norms and cones in the theory of quantum
entanglement,'' arXiv:1207.1479 [quant-ph] (2012).

\bibitem{filippov-ziman-2013}
S. N. Filippov and M. Ziman, {Phys. Rev. A} {\bf 88}, 032316
(2013).

\bibitem{manko-2002}
O. V. Man'ko, V. I. Man'ko, and G. Marmo, {J. Phys. A: Math. Gen.}
{\bf 35}, 699 (2002).

\bibitem{manko-2007}
O. V. Man'ko, V. I. Man'ko, G. Marmo, and P. Vitale, {Phys. Lett.
A} {\bf 360}, 522 (2007).

\bibitem{horodecki-isotropic-1999}
M. Horodecki and P. Horodecki, {Phys. Rev. A} {\bf 59}, 4206
(1999).

\bibitem{brandao-eisert-2008}
F. G. S. L. Brand\~{a}o and J. Eisert, {J. Math. Phys.} {\bf 49},
042102 (2008).

\bibitem{augusiak-2010}
R. Augusiak, J. Grabowski, M. Ku\'{s}, and M. Lewenstein, {Optics
Communications} {\bf 283}, 805 (2010).

\bibitem{filippov-ziman-2014}
S. N. Filippov and M. Ziman, {Phys. Rev. A} {\bf 90}, 010301(R)
(2014).

\bibitem{shchukin-2005}
E. Shchukin and W. Vogel, {Phys. Rev. Lett.} {\bf 95}, 230502
(2005).

\end{thebibliography}
\end{document}